\documentclass[preprint,authoryear,12pt]{article}
\usepackage{mathrsfs}
\usepackage{natbib,graphicx,setspace,lscape,longtable}
\usepackage{stmaryrd}
\usepackage{natbib,epsfig,graphicx}
\usepackage{mathrsfs,amsmath,amsthm,amssymb,color}
\usepackage{enumerate}
\usepackage{multirow}
\usepackage{caption,subcaption}
\usepackage{varwidth}
\usepackage{algorithm}
\usepackage{algpseudocode}
\usepackage{tabularx,threeparttable,multirow,multienum,booktabs}
\usepackage{url}
\usepackage{bbm}

\setlength{\textwidth}{6.3in}
\setlength{\textheight}{9.3in}
\setlength{\topmargin}{-0.5in}

\setlength{\oddsidemargin}{0in}
\setlength{\evensidemargin}{0in}

\newtheorem{theorem}{Theorem}

\newtheorem{lemma}{Lemma}
\newtheorem{proposition}{Proposition}

\newtheorem{corollary}{Corollary}
\baselineskip = 5mm
\parskip = 1.5mm

\newcommand{\bm}[1]{\mbox{\boldmath{$#1$}}}

\def\beq{\begin{equation}}
\def\eeq{\end{equation}}
\def\beqr{\begin{eqnarray}}
\def\eeqr{\end{eqnarray}}
\def\beqrs{\begin{eqnarray*}}
\def\eeqrs{\end{eqnarray*}}
\def\bet{\begin{theorem}}
\def\eet{\end{theorem}}
\def\bec{\begin{corollary}}
\def\eec{\end{corollary}}
\def\bel{\begin{lemma}}
\def\eel{\end{lemma}}
\def\bep{\begin{proposition}}
\def\eep{\end{proposition}}
\def\bg{\begin{figure}[tbph]\begin{center}}
\def\eg{\end{center}\end{figure}}

\def\bc{\begin{center}}
\def\ec{\end{center}}

\def\mR{\mathbb{R}}

\def\var{\mathrm{var}}
\def\cov{\mathrm{cov}}

\numberwithin{equation}{section}

\begin{document}

\title{Hypothesis Testing of One Sample Mean Vector  in Distributed Frameworks}
\author{Bin Du$^1$, \quad Junlong Zhao$^1$\footnote{Corresponding author. Email address: zhaojunlong928@126.com.
}
 \\
\small\it 1. School of Statistics, Beijing Normal  University, China\\
}
\maketitle
\begin{abstract}
\noindent    Distributed frameworks are widely used to handle massive data, where sample size $n$ is very large, and  data are often stored in $k$ different machines. For a random vector $X\in\mR^p$ with expectation $\bm\mu$, testing the mean vector $H_0: \bm\mu=\bm\mu_0$ vs $H_1: \bm\mu\ne \bm\mu_0$ for a given vector $\bm\mu_0$ is a  basic  problem  in statistics. The centralized test statistics require heavy communication costs, which can be a burden when $p$ or $k$ is large.
To reduce the communication cost,  distributed test statistics are  proposed in this paper for this problem based on the  divide and conquer technique, a commonly  used approach for distributed statistical inference.  Specifically, we extend two commonly used centralized test statistics to the distributed ones, that apply to    low and high dimensional cases, respectively.      Comparing the power of centralized test statistics and the distributed ones, it is observed that there is a fundamental tradeoff between communication costs and  powers of the tests. This is quite  different from the application   of the divide and conquer technique in many other problems such as estimation, where the associated distributed statistics can be as good as the centralized ones.
   Numerical results confirm the theoretical findings.

\noindent{\bf Key words:}\  mean test, large and high dimension, divide and conquer, power function.

\end{abstract}

\newpage
\setcounter{page}{1}
\section{Introduction}

In recent years, massive data are commonly encountered in many applications and  are often stored in $k$ different machines in a distributed system. Developing communication-efficient  statistical methods has attracted a lot of attention recently. The divide and conquer technique  is a popular method in distributed frameworks, where one constructs a statistic or an estimator using data in each machine, and then transmits them to the hub to  get a pooled one.
The divide and conquer technique  has been applied successfully  in many problems, including  regression and classification, hypothesis testing,  confidence intervals, principal eigenspaces analysis, linear discriminant analysis, and many others \citep[etc.]{Zhang2013,Hsieh2014,Zhang2015,Lin2017,Szabo2017,Battey2018,Guo2019,Chen2018,Michael2019, Fan2019, Tian2017,Limengyu2020,Dobriban2021}.

In many problems  such as point estimations, the statistic constructed by  the divide and conquer technique can be  as efficient  as the centralized one,  when the number of machines $k$ is not too large, usually much smaller than $\sqrt{n}$ \citep[etc.]{Chen2014,Zhang2015,Zhang2013,Michael2019,Volgushev2019,Chen2018}.
Although the divide and conquer technique has been used successfully in many problems in the distributed system, its application to  hypothesis testing of the mean vector for massive data has not  been studied.

For a variable   $X\in\mR^p$ with $E(X)=\bm\mu$, the hypothesis testing on the  mean vector $\bm\mu$    is a basic problem  in statistics, playing   a critical role  in many applications  such as quality control,   environmental science, geography, medicine, education, social sciences, and many others. For example, in quality control, to  test  a batch of  products is qualified or not, one needs to consider the test problem:  $H_0: \bm\mu=\bm\mu_0$ vs $H_1:\bm\mu\ne \bm\mu_0$, where $\bm\mu\in\mR^p$ is the mean of the products considered and  $\bm\mu_0\in\mR^p$ is the standard
 \citep[etc.]{Edward1985,Ye2002}.
 In addition, to analyze the difference of paired samples  and  the effect of treatments with matched samples \citep[etc.]{Davison1992,Rubin2006,Haug2011}, it is common to   transform the problem  into a one-sample mean test problem $H_0:\bm\mu=\bm0$ vs $H_1: \bm\mu\ne \bm0$, where  $\bm\mu\in\mR^p$ is the mean of the difference of the paired samples, and  rejecting $H_0$ implies that there exist  treatment effects.

In this paper, we consider the  hypothesis   testing problem
\begin{equation}\label{eq1}
 H_0: \bm\mu= \bm\mu_0 \quad versus \quad H_1: \bm\mu\neq\bm\mu_0,
\end{equation}
where $\bm\mu_0=(\mu_{01},\cdots,\mu_{0p})^\top$ is a given vector.
   Supposing that   $\{X_i\}_{i=1}^{n}$ are  $p$-dimensional random vectors that are  $i.i.d.$ from the normal distribution $N_p(\bm\mu,\Sigma)$ with the unknown covariance matrix  $\Sigma$, satisfying  $p<n$, the classical  test statistic for this problem  is the Hotelling $T^2$  statistic \citep{Anderson1984}, which is a centralized one,  defined as follows
\[T_{cen,n}^2=(n-1) (\bar X - \bm\mu_0 )^\top \hat\Sigma^{-1} (\bar X - \bm\mu_0 ),\]
where $\bar X=n^{-1}\sum_{i=1}^n X_i$  is the sample mean and $\hat\Sigma=\sum_{i=1}^n (X_i-\bar X)(X_i-\bar X )^\top/n$ is the sample covariance matrix.
However, when the sample covariance matrix is nearly singular, the power of the Hotelling $T^2$ test decreases significantly  even when $p < n$,
 \citep{Bai1996}. Many  extensions of the Hotelling
$T^2$ test to high dimensional cases  have been proposed in
the literature. For example, \cite{Bai1996} considered the case of  $p/n\to r\in(0,1)$, \cite{Srivastava2008} investigated the case  of $p=O(n^\zeta)$ with $\zeta\in (1/2,1)$, and \cite{Lee2012} studied the case of $p/n\to r>0$. In addition, many authors  considered
the case where  $p$ can be much larger than $n$ and the distribution is not Gaussian \citep[etc.]{Chen2010,Wang2015,Srivastava2016,Xu2016,Dong2016,Chakraborty2017,Li2020}. For example, the test statistic in \cite{Wang2015} has significant gains for heavy-tailed multivariate distributions.

Although there are many works on the problem (\ref{eq1}),  to the best of our knowledge, this problem  has not been studied in a distributed framework, when  massive data are  collected and are stored in $k$ different machines.  When $p<n$, the classical Hotelling $T^2$ test can be computed   by transmitting   $k$ matrices of the size $p\times p$, as shown in Section 2.1, of which  the   communication cost can be expensive  when $p$ or $k$ is large.
Taking this  into account, we propose a distributed version of the Hotelling $T^2$ test statistic that is  communication-efficient, based on the idea of  the divide and conquer technique, and  derive  the asymptotical distribution of the distributed test statistic under $H_0$.  Furthermore, we compare the power functions of the centralized Hotelling $T^2$ test with the distributed one.

When $p$ is much larger than $n$,  as an illustration,  we extend the centralized test statistic of \cite{Wang2015}, proposing a  distributed test statistic, deriving  its asymptotic distribution   under $H_0$, and compring the power functions of the distributed test with the centralized one.  Theorectical results show that the powers of the distributed test statistics are decreasing  functions of $k$, indicating that there is a fundamental tradeoff between communication cost and efficiency. This  is different from the existing results in other settings  such as point estimations, where the distributed statistics can be as good as the centralized ones.

The rest of the paper is organized as follows. In Section 2, a  distributed  test statistic  is constructed by applying the divide and conquer technique to the Hotelling $T^2$ test, and the  asymptotic distribution under $H_0$   is developed and the  power function of the distributed test statistic  is   compared with the centralized one.    In Section 3, we consider the high dimensional case, constructing a distributed test  statistic by applying the divide and conquer technique to the test statistic  of \cite{Wang2015}; the asymptotic distribution under $H_0$ and its power are studied. Simulation results and real data analysis are reported  in Section 4 and a brief discussion is presented in Section 5. All the proofs are  presented in Section  6.

\section{One Sample  Mean Test  in  Distributed Frameworks}
\subsection{Computation of the centralized Hotelling $T^2$ test in  distributed frameworks}

Since  data are stored in $k$ different machines, we denote by $S_l$ the set of  indices of the observations on the $l$-th machine, that is, $S_l=\{i:X_i$  is stored in the $l$-th  machine$\}$, $l=1,\cdots,k$. Let  $n_l=|S_l|$, the sample size on the $l$-th machine.
In the distributed framework,   constructing the centralized test statistic  is still feasible by transmitting some matrices of size $p$ by $p$, as shown in Algorithm 1.

\begin{algorithm}[ht]
\caption{Computing the Centralized Hotelling $T^2$ in a Distributed  Framework}
 \begin{algorithmic}[1]
  \State For $l=1,\cdots, k$,  compute the local sample mean $\bar X_l$ and the local second moment   $A_l$ using  observations in the $l$-th machine where
      $\bar X_l=n_l^{-1} \sum_{i\in S_l}X_i$ and $A_l=n_l^{-1}\sum_{i \in S_l} X_iX_i^\top.$

  \State Transmit the $\bar X_l$'s and $A_l$'s to the central hub, and compute the  global sample
  mean and the global sample covariance matrix denoted as $\bar X=n^{-1}\sum_{\mathnormal{l}=1}^k n_l \bar X_l$ and  $ \hat\Sigma=n^{-1}\sum_{l =1}^k  n_lA_l -\bar X\bar X^\top$, respectively.

  \State Define the  centralized Hotelling $T^2$  test statistic $T_{cen,n}^2=(n-1)(\bar X -  \bm\mu_0 )^\top \hat\Sigma^{-1}(\bar X- \bm\mu_0 )$.

 \end{algorithmic}
\end{algorithm}

Suppose  that $\{X_i\}_{i=1}^{n}$ are $i.i.d.$ random variables  following $N_p(\bm\mu,\Sigma)$.  Under $H_0$,  according to the relationship between Hotelling $T^2$ distribution and the $F$-distribution, we have
\[ \frac{n-p}{(n-1)p} T_{cen,n}^2 \sim F(p,n-p), \]
where $F(p,n-p)$ denotes the $F$-distribution with the degrees of freedom $p$ and $n-p$.
Given the significant level $\alpha\in(0,1)$,  the rejection region for Hotelling $T^2$ test is denoted as
\[\mathcal{C}_1=\Big\{T_{cen,n}^2:~\frac{n-p}{(n-1)p} T_{cen,n}^2 > F_{1-\alpha} (p,n-p)\Big\},\]
where $F_{1-\alpha} (p,n-p)$ is $1-\alpha$ quantile of $F(p,n-p)$.
 Clearly, to obtain centralized test statistic $T_{cen,n}^2$, one  needs to transfer $p\times p$ matrices $A_l$'s to the hub, which can be a burden   when $p$ or $k$ is large.  To reduce the communication cost, we consider a distributed   test statistic.
\subsection{Extending   Hotelling $T^2$ test in distributed frameworks}

 For simplicity of notations, assume that data are randomly and evenly distributed  in $k$ machines, that is
 $$n_1=\cdots=n_k=n/k,$$
  where  $n_l$ is  the sample size in the $l$-th machine, $l= 1, \cdots, k$.
  We propose the following  distributed test statistic using the divide and conquer technique, by
       computing first the  Hotelling $T^2$ test statistic with data  in each machine, and then pooling  them together.  Specifically,   the distributed test statistic is defined as follows
$$T_{dis,n}^2=\frac{1}{k\sqrt{p}}  \sum_{l =1}^k T_l ^2,$$
   where $T_l ^2 =(n_l -1)(\bar X_l - \bm\mu_0)^\top \hat\Sigma_l ^{-1}(\bar X_l - \bm\mu_0)$ is the local Hotelling $T^2$ statistic based on the data in the $l$-th machine, and $\hat\Sigma_l$ and $\bar X_l$ are  the local sample covariance matrix and the local sample mean, respectively.
 Note that $T_{dis,n}^2$ is communication efficient, requiring  only the transmission of the scalars $T_l$'s to the hub.
Under $H_0$, it is shown in the Theorem \ref{Th1} that   $T_{dis,n}^2$  converges  to a normal distribution.  Assume that  $k$ and $p$ are functions of $n$, that is, $k =k_n$ and $ p=p_n$, and the following conditions hold.
\begin{itemize}
  \item[(A1)] Let  $\gamma_{n}=pk/n$. Assume that $\gamma_n<1$ and that $\gamma_{n}\to r\in[0, 1)$, as $n\to \infty$. Moreover, assume that   $n_l=n/k\to\infty, l=1,\cdots,k$, as $n\to \infty$.
\end{itemize}
Condition (A1)  requires that $k=O(n/p)$, where both $k$ and $p$ can diverge with $n$.
\bet\label{Th1}
Suppose that $X_i$'s are i.i.d. from $N_p(\bm\mu,\Sigma)$ and  that condition (A1) holds.  Under $H_0$,  as $k\to\infty$,  it holds that
\begin{equation}\label{T_p-norm}
\sqrt k\left(T_{dis,n}^2-\frac{(n/k-1)\sqrt p}{n/k-p-2}\right)\stackrel{d}{\longrightarrow} N\left(0,\frac 2 {(1-r)^3}\right).
\end{equation}
\eet

Based on Theorem \ref{Th1}, the rejection region for the  distributed test can be written as follows,
 \[\mathcal{C}_2=\Big\{ T_{dis,n}^2:~\left(\frac{(1-r)^3 k}{2}\right)^{1/2} \Big|  T_{dis,n}^2-\frac{(n/k-1)\sqrt p}{n/k -p-2}\Big| > z_{1-\frac \alpha 2}\Big\},\]
where $z_{1-\alpha/2}$ is the $1-\alpha/2$ quantile of standard normal distribution.
Compare the distributed test statistic with the centralized one, there are two observations.
(1) To compute the  centralized test statistic, one  needs to transmit  $k$ matrices of size $p\times p$ to the hub, while the distributed one only requires the  transmission  of  $k$ scalars to the hub, having great advantages in communication cost.
(2)   For the distributed test statistic, we get a normal distribution asymptotically, which differs from the centralized one.

\subsection{The power of the distributed test}

In this section, we compare  the powers of the distributed test statistic and the  centralized one  under $H_1$.
Denote by  $\psi_n^{cen}=P( \mathcal{C}_1| H_1 )$
and     $\psi_n^{dis}=P( \mathcal{C}_2| H_1)$ the  power functions of the centralized test and the distributed one, respectively.    Let $\alpha$ be a significance level of the test and let  $\Delta=( \bm\mu- \bm\mu_0)^\top\Sigma^{-1}( \bm\mu- \bm\mu_0)$.
Denote
 \begin{align}
\phi_{\Delta,\gamma_{n}}=1&-\Phi\left(A_{1n} z_{1-\frac{\alpha}{2}}- A_{2n}\sqrt{(1-\gamma_{n})/2} \right)+\Phi\left(A_{1n} z_{\frac{\alpha}{2}}- A_{2n}\sqrt{(1-\gamma_n)/2} \right)\notag
\end{align}
with
$$A_{1n}=\left(\frac{\gamma_{n}}{\gamma_{n}+2\Delta+\Delta^2}\right)^{1/2}, \qquad A_{2n}=\left(\frac {n\Delta^2}{\gamma_{n}+2\Delta+\Delta^2}\right)^{1/2},  $$
 where  $\Phi(\cdot)$ denotes   the cumulative distribution function of standard normal distribution. The properties of $\psi_n^{cen}$ and $\psi_n^{dis}$ are as follows. In the following descriptions, notations $a_n\gg b_n$ and $a_n\ll b_n$ mean $a_n/b_n\to \infty$ and $a_n/b_n\to 0$ respectively, as $n\to \infty$.
\bet\label{Th2}
(1) If $\Delta\gg\sqrt p/n$, then $\psi_n^{cen}\to1$ when $n\to\infty$.  (2) Assuming that condition (A1) holds and that $k\to\infty$, then
   $\psi_n^{dis}-\phi_{\Delta,\gamma_{n}}\to 0$.
\eet
The proof of Theorem \ref{Th2} is given  in Appendix.   From the definition of $\phi_{\Delta,\gamma_n}$, it is easy to see  that $\phi_{\Delta,\gamma_{n}}\to 1$, when $A_{1n}\to 0$ or $ A_{2n}\to \infty$. The following corollary clarifies conditions of $\psi_n^{dis}\to1$.
\begin{corollary}\label{Cor}
Assume  that condition  (A1) holds and that $k\to\infty$.  Then
  $A_{1n}\to 0$ when  $\Delta\gg \gamma_n$, and $A_{2n}\to\infty$ when  $\Delta\gg\sqrt{kp}/n$.
Consequently, $\psi_n^{dis}\to1$ as $\Delta\gg\sqrt{kp}/n$.
\end{corollary}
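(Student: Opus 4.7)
The plan is to verify the three assertions of the corollary in turn. For the first, $A_{1n}\to 0$, I would start from $A_{1n}^2=\gamma_n/(\gamma_n+2\Delta+\Delta^2)$, drop the nonnegative terms $\gamma_n$ and $\Delta^2$ in the denominator to obtain $A_{1n}^2\le \gamma_n/(2\Delta)$, and conclude via the hypothesis $\Delta\gg\gamma_n$.

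For the second claim, $A_{2n}\to\infty$, I first observe that $\sqrt{kp}/n=\sqrt{\gamma_n/n}$, so the hypothesis $\Delta\gg\sqrt{kp}/n$ is equivalent to $n\Delta^2/\gamma_n\to\infty$. The main obstacle here is that the denominator $\gamma_n+2\Delta+\Delta^2$ of $A_{2n}^2=n\Delta^2/(\gamma_n+2\Delta+\Delta^2)$ may be dominated by any of its three terms, so no single equivalent can be substituted, and I would split into three regimes. When $\Delta\ge 1$, condition (A1) gives $\gamma_n<1\le\Delta^2$, hence the denominator is at most $4\Delta^2$ and $A_{2n}^2\ge n/4\to\infty$. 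When $\Delta<1$ and $\gamma_n\ge 3\Delta$, the denominator is bounded by $2\gamma_n$, so $A_{2n}^2\ge n\Delta^2/(2\gamma_n)\to\infty$ by the hypothesis. Finally, when $\Delta<1$ and $\gamma_n<3\Delta$, the denominator is at most $6\Delta$, giving $A_{2n}^2\ge n\Delta/6$; since $n\Delta\gg\sqrt{n\gamma_n}=\sqrt{pk}\to\infty$ under $k\to\infty$ and $p\ge 1$, this regime is also handled.

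For the concluding statement $\psi_n^{dis}\to 1$, Theorem \ref{Th2}(2) reduces the task to showing $\phi_{\Delta,\gamma_n}\to 1$. The ratio defining $A_{1n}^2$ is at most one, so $A_{1n}z_{1-\alpha/2}$ and $A_{1n}z_{\alpha/2}$ remain bounded; meanwhile $(1-\gamma_n)/2\to (1-r)/2>0$ by (A1), and $A_{2n}\to\infty$ by the previous step, so $A_{2n}\sqrt{(1-\gamma_n)/2}\to\infty$. Consequently both arguments $A_{1n}z_{\pm\alpha/2}-A_{2n}\sqrt{(1-\gamma_n)/2}$ diverge to $-\infty$, both $\Phi$ terms vanish, and $\phi_{\Delta,\gamma_n}\to 1-0+0=1$ as desired. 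Note in particular that the first claim $A_{1n}\to 0$ is not strictly required for this consequence; boundedness of $A_{1n}$ (which is automatic) already suffices once $A_{2n}\to\infty$ is in hand.
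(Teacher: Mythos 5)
Your proof is correct, and it follows the same overall strategy as the paper's own argument: bound $A_{1n}^2$ above by $\gamma_n/(2\Delta)$, run a case analysis on which term of the denominator $\gamma_n+2\Delta+\Delta^2$ dominates $A_{2n}^2$, and conclude via Theorem~\ref{Th2}(2) once $A_{2n}\sqrt{(1-\gamma_n)/2}\to\infty$ is established. The difference is in how the cases are cut, and yours is actually the tighter version. The paper compares asymptotic orders of $\Delta$ and $\gamma_n$ (the regimes $\Delta\gg\gamma_n$, $\Delta\asymp\gamma_n$, $\Delta\ll\gamma_n$, working from the rewritten form $A_{2n}^2=n\Delta/(\gamma_n/\Delta+\Delta+2)$), which tacitly assumes the ratio $\Delta/\gamma_n$ settles into a single regime and is a bit loose in the first case (the claim $A_{2n}\gg\sqrt{kp}$ ``due to $kp<n$'' only gives $A_{2n}^2\gtrsim n$ when $\Delta\to\infty$, though divergence still follows since $k\to\infty$). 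Your trichotomy with explicit thresholds ($\Delta\ge1$; $\Delta<1$ with $\gamma_n\ge3\Delta$; $\Delta<1$ with $\gamma_n<3\Delta$) is pointwise exhaustive, so it covers sequences along which $\Delta/\gamma_n$ oscillates between regimes, and each of your lower bounds ($n/4$, $n\Delta^2/(2\gamma_n)$, $n\Delta/6$) comes from a sequence that diverges globally, which makes $A_{2n}\to\infty$ airtight. You also make explicit the final step $\phi_{\Delta,\gamma_n}\to1$, which the paper delegates to the remark preceding the corollary; your observation that boundedness of $A_{1n}$ (automatic, since $A_{1n}\le1$) together with $A_{2n}\to\infty$ already yields the consequence is correct and mildly sharpens that remark --- in the paper's alternative route via $A_{1n}\to0$ alone, $\phi_{\Delta,\gamma_n}\to1$ instead by the two $\Phi$-arguments merging, so the two mechanisms are genuinely different.
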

\begin{proof}
The conclusions in Corollary \ref{Cor} is derived by discussing the order of $\Delta$ and $\gamma_n$.
First, we prove the conclusion on $A_{1n}$.
From the definition of $A_{1n}$, it holds that  $A_{1n}\to0$ when $\Delta\gg\gamma_n$ or $\Delta^2\gg\gamma_n$. Because $\gamma_n<1$, we see that  $A_{1n}\to0$ when $\Delta\gg\gamma_n$.
Second, we prove the conclusion on  $A_{2n}$. Recall that
\[A_{2n}=\left(\frac {n\Delta}{\gamma_{n}/\Delta+\Delta+2}\right)^{1/2}.\]
When $\Delta\gg\gamma_n$, it is seen    that   $A_{2n}\gg \sqrt{kp}$, due to $kp<n$. Moreover, when $\Delta$ and $\gamma_n$ have the same order, $A_{2n}$ has the same order as $\sqrt{kp}$. Due to $k\to\infty$,  we see that  $A_{2n}\to\infty$, when $\gamma_n=O (\Delta)$.
When $\Delta\ll\gamma_n$, it is easy to see that
\[A_{2n}=\left(\frac {n\Delta^2}{\gamma_{n}}\right)^{1/2}(1+o(1)),\]
and consequently that  $A_{2n}\to\infty$, when $\Delta\gg\sqrt{\gamma_n/n}$, i.e. $\Delta\gg\sqrt{kp}/n$.
Combining the argument together, we have $A_{2n}\to\infty$, when  $\Delta\gg\sqrt{kp}/n$.
\end{proof}

To get some insights, we make a plot  of the function  $\phi_{\Delta,\gamma_{n}}$ with $n=10^5$, $p=10^3$ and $\alpha=0.05$ for different values of $k$  in Figure \ref{fig1}. It is seen that   $\phi_{\Delta,\gamma_{n}}$    is an  increasing function of $\Delta$ and a decreasing one of $k$. And  from Theorem \ref{Th2} and Corollary \ref{Cor}, we notice the power of the distributed test is lower than the centralized one.
\begin{figure}[htb]
\centering
\includegraphics[width=8cm,height=6cm]{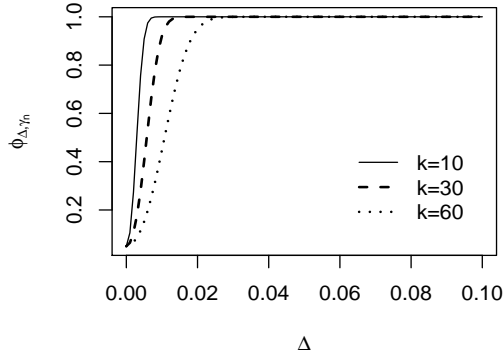}
\caption{Plot of the function $\phi_{\Delta,\gamma_n} $ for different $k$ and $\Delta$.  \label{fig1}}
\end{figure}

Recall that $\psi_n^{dis}\to1$ as $\Delta\gg\sqrt{kp}/n$ and that  $\psi_n^{cen}\to1$ as $\Delta\gg\sqrt p/n$. 
Then the asymptotic relative efficiency of the distributed test statistic to the centralized one can be measured by    the ratio   $(\sqrt p/n)/(\sqrt{kp}/n)=\sqrt{1/k}$. 

\section{ One Sample Mean Test in  High-dimensional Cases in Distributed Frameworks}
Hotelling $T^2$ test statistic is not well defined when the dimension is larger than the sample size. Moreover, the distribution of Hotelling $T^2$ test statistic depends on the normal distribution heavily. In this section, we apply  the divide and conquer approach to the  test statistics that are applicable to general distributions and high dimensional settings.

\subsection{Review of some extensions of Hotelling $T^2$ statistic in high dimensional cases}
Although Hotelling $T^2$ test is a classical one, it  has serious  limitations when the  dimension $p$ is comparable to, or larger than the sample size $n$. Obviously, the test statistic is not well defined for $p>n$ because of the singularity of the sample covariance matrix. Moreover, Hotelling $T^2$ test    is also known to perform poorly when  $p/n\to r\in (0,1)$.
 For example, \cite{Dempster1958} proposed the non-exact test in high dimension.
 \cite{Bai1996} further found that even when Hotelling $T^2$ statistic is well defined, Dempster's non-exact test is more powerful than the Hotelling $T^2$ for $p/n\to r\in (0,1)$.

To overcome the limitations of Hotelling $T^2$ test.  Many test statistics are developed by  replacing the inverse of sample covariance matrix in Hotelling $T^2$ statistic with a diagonal matrix or the identity matrix. These statistics are  applicable to high-dimensional cases and to more general distributions beyond normal.
For example, \cite{Bai1996}  assumed the following model: $X_{i}=\Lambda W_{i}+\bm\mu_{i}$, for $i=1,\cdots,n$, where $\Lambda$ is a $p\times m$ non-random matrix for some $m\ge p$ such that $\Lambda\Lambda^\top=\Sigma$, and $\{W_{i}\}_{i=1}^{n}$ are $m$-dimensional vectors   satisfying $E(W_{i})=0$ and ${\rm cov}(W_{i})=I_m$, the $m\times m$ identity matrix. And they assumed $p/n\to r>0$, that is, $p$ can be comparable to $n$ or larger than $n$. This model  includes the normal distribution as a special case.
Instead of using Mahalanobis distance, \cite{Bai1996} constructed the test statistic based on Euclidean distance as follows,
$$
T_{BS}=n(\bar X-\bm\mu_0)^\top(\bar X-\bm\mu_0)-\frac{n}{n-1}{\rm Tr}(\hat\Sigma),
$$
where $\bar X$ is the sample mean. %
\cite{Srivastava2008} noticed that the statistic $T_{BS}$ is not scale-invariant, which may reduce the power when the scales of different components of the random vector $X_i$ are quite different. As a remedy, they proposed  the following statistic
$
T_{SD}=(\bar X-\bm\mu_0)^\top D^{-1}(\bar X-\bm\mu_0),
$
where $D={\rm diag}(n\hat\sigma_{11}/(n-1),\cdots,n\hat\sigma_{pp}/(n-1))$, consisting of the diagonal elements  of the matrix $n\hat\Sigma/(n-1)$ with $\hat\Sigma$ denoting the sample covariance matrix. In addition,  it is assumed by \cite{Srivastava2008}  that $n=O(p^\zeta)$ for some $1/2<\zeta\le1$ with normality assumption. Furthermore,  \cite{Srivastava2009} relaxed the  normality assumption, and assumed $n=O(p^\zeta)$, $0<\zeta\le1$.
\cite{Chen2011} proposed another test statistic
$$T_{RHT}=(\bar X-\bm\mu_0)^\top (n\hat\Sigma/(n-1)+\lambda I)^{-1}(\bar X-\bm\mu_0),\quad \text{for $\lambda>0,$}$$
under the setting $p/n\to r\in(0,\infty)$,   stabilizing  the inverse of the sample  covariance matrix.
 As an improvement to  the work of  \cite{Bai1996}, \cite{Chen2010} proposed a statistic by removing the  quadratic   terms in $(\bar X-\bm\mu_0)^\top(\bar X-\bm\mu_0)$, which relaxed the restrictions on $p$ and $n$ greatly. \cite{Wang2015} considered robust testing procedures which are powerful to distributions with heavy-tailed and to data with outlying observations.
  In the following subsection 3.2, as an example for illustration, we apply the  divide and conquer method  to the test statistic of \cite{Wang2015}.

\subsection{Nonparametric multivariate tests in a distributed system}
To handle heavy-tailed distributions,   \cite{Wang2015}    assumed  that  $X_i$'s are $i.i.d.$ variables  following  an elliptical distribution. Specifically,
\begin{equation}\label{eq5}
X_i=\bm\mu+\epsilon_i, \quad \epsilon_i=\Gamma R_iU_i,\qquad 1\le i\le n,
\end{equation}
where $\Gamma$ is a $p\times p$ matrix, $U_i$ is a random vector uniformly distributed on the unit sphere in $\mR^p$, and $R_i$ is a nonnegative random variable independent of $U_i$. 
For the hypothesis testing (\ref{eq1}),  the centralized test statistic in \cite{Wang2015}
is as follows
$$G_{cen,n}=\sum_{i=1}^n\sum_{j=1}^{i-1}Z_i^\top Z_j, $$
where $Z_i=(X_i-\bm\mu_0)/\|X_i-\bm\mu_0\|$ if $X_i\neq\bm\mu_0$, $Z_i=0$ if $X_i=\bm\mu_0$, and $\|X_i\|$ denotes the $L_2$ norm of $X_i$.
When data are stored at random in $k$ machines, $G_{cen,n}$ can be computed  as follows. For $l=1,\cdots,k$, we compute local sample sum $B_{l}=\sum_{i\in S_l} Z_i$ and transmit $B_l$'s to the central hub. Then $G_{cen,n}=[(\sum_{l=1}^k B_l)^\top(\sum_{l=1}^k B_l)-n]/2$.

However, when $p$ is large and communication cost is prohibitively heavy, we are interested in the following distributed version of the statistic
$$G_{dis,n}=\sum_{l=1}^k\sum_{\substack{ i,j\in S_l,\\ j<i}}Z_i^\top Z_j:= \sum_{l=1}^k G_{l},$$
where  $G_{l}=\sum\limits_{i,j\in S_l, j<i}Z_i^\top Z_j\in \mR$ and $S_l$ is the indices of observations in $l$-th machine. To obtain $G_{dis,n}$,  only  scalars $G_{l}$'s need to be transmitted to  the hub.  Comparison of the  powers of $G_{cen,n}$ and $G_{dis,n}$ is an interesting problem.
To derive the asymptotic properties of  $G_{cen,n}$ and $G_{dis,n}$, we need the following conditions, where ${\rm Tr}(\cdot)$ denotes the trace and $\Sigma=\cov(X_i)$.
\begin{itemize}
  \item [(A2)] Let $a_{m,\Sigma}=\mathrm{Tr}(\Sigma^m)$ for any positive integer $m$, and $a_{\max}$ and $a_{\min}$ be the maximum and minimum eigenvalues of $\Sigma$, respectively. Assume that   ($i$) $a_{4,\Sigma}=o(a_{2,\Sigma}^2)$; ($ii$) $(a^4_{1,\Sigma}/a^2_{2,\Sigma})\exp \big\{-a^2_{1,\Sigma}/(128pa_{\max}^2) \big\}=o(1)$; ($iii$) $a_{\max}=o(a_{1,\Sigma})$;   ($iv$)
 $\exp \big\{-a_{1,\Sigma}^2/(256 p a^2_{\max} )\big\}=o\big(\min \{a_{\max}/ a_{1,\Sigma}, a_{\min }/a_{\max}\}).$
  \item[(A3)]  For some $0<\delta<1$, $\|\bm\mu-\bm\mu_0\|^{2 \delta} E\big(\|\epsilon_i\|^{-2-2 \delta}\big)= o\big(E^{2}\big(\|\epsilon_i\|^{-1}\big)\big) .$
\item[(A4)] $\|\bm\mu-\bm\mu_0\|^{2} E\big(\|\epsilon_i\|^{-2}\big)=o\big(\min \{a_{2,\Sigma}/(n a_{\max}a_{1,\Sigma}) ,  a^{1/2}_{2,\Sigma}/(n^{1 / 2}a_{1,\Sigma})\}\big) .$
\end{itemize}
These conditions are exactly the conditions (C1)--(C6) in \cite{Wang2015}.
Let $$ A_\epsilon=E((I_p-\epsilon_i\epsilon_i^\top/\|\epsilon_i\|^2)/\|\epsilon_i\|), \qquad  B_\epsilon=E(\epsilon_i\epsilon_i^\top/\|\epsilon_i\|^2).$$
 \cite{Wang2015} proved the following conclusion on the centralized test statistic $G_{cen,n}$.
\begin{lemma}(\cite{Wang2015})
   Assuming  the conditions ($i$) and ($ii$) in (A2), under $H_0$, as $n,p\to\infty$, it holds that
$$G_{cen,n}/\sqrt{n(n-1){\rm Tr}(B_\epsilon^2)/2}\stackrel{d}{\rightarrow}  N(0,1).$$
 And under $H_1$, if the conditions (A2)--(A4) hold,  as $n,p\to\infty$, $$[G_{cen,n}-n(n-1)(\bm\mu-\bm\mu_0)^\top A_\epsilon^2 (\bm\mu-\bm\mu_0 )(1+o(1))/2]/\sqrt{n(n-1){\rm Tr}(B_\epsilon^2)/2}\stackrel{d}{\rightarrow}  N(0,1).$$
\end{lemma}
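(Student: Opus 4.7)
My plan is to treat $G_{cen,n}$ as a $U$-statistic and attack it via a martingale-difference decomposition. Set $\mathcal{F}_k=\sigma(X_1,\ldots,X_k)$ and $D_k=Z_k^\top\sum_{j<k}Z_j$ for $k\ge 2$, so that $G_{cen,n}=\sum_{k=2}^n D_k$. Under $H_0$ the elliptical model (\ref{eq5}) gives $Z_i=\Gamma U_i/\|\Gamma U_i\|$, and the symmetry $U_i\stackrel{d}{=}-U_i$ yields $E[Z_i]=0$, while by definition $E[Z_iZ_i^\top]=B_\epsilon$. Consequently $\{D_k,\mathcal{F}_k\}$ is a martingale difference sequence with $E[D_k^2\mid\mathcal{F}_{k-1}]=(\sum_{j<k}Z_j)^\top B_\epsilon(\sum_{j<k}Z_j)$, and a direct calculation gives $\sum_{k=2}^n E[D_k^2]=n(n-1)\mathrm{Tr}(B_\epsilon^2)/2$, matching the stated scaling.

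I would then verify the two hypotheses of a martingale CLT. For the conditional variance, a second-moment computation on the quadratic form $(\sum_{j<k}Z_j)^\top B_\epsilon(\sum_{j<k}Z_j)$ reduces to controlling $E[(Z_i^\top B_\epsilon Z_j)^2]$ and traces of powers of $B_\epsilon$; condition (A2)($i$), $a_{4,\Sigma}=o(a_{2,\Sigma}^2)$, is exactly what forces $\sum_k E[D_k^2\mid\mathcal{F}_{k-1}]/[n(n-1)\mathrm{Tr}(B_\epsilon^2)/2]\stackrel{P}{\to}1$. For the Lindeberg condition I would bound $\sum_k E[D_k^4]$ divided by $[n(n-1)\mathrm{Tr}(B_\epsilon^2)/2]^2$ and show it tends to zero using condition (A2)($ii$). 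Together these two ingredients deliver the $N(0,1)$ limit under $H_0$.

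Under $H_1$, writing $\bm\eta=\bm\mu-\bm\mu_0$ and $Y_i=\bm\eta+\epsilon_i$, the strategy is a first-order Taylor expansion of $Y_i/\|Y_i\|$ around $\bm\eta=0$. Differentiation yields $\nabla_{\bm\eta}(Y_i/\|Y_i\|)|_{\bm\eta=0}=(I_p-\epsilon_i\epsilon_i^\top/\|\epsilon_i\|^2)/\|\epsilon_i\|$, so after taking expectations and using $E[\epsilon_i/\|\epsilon_i\|]=0$ from elliptical symmetry one gets $E[Z_i]=A_\epsilon\bm\eta+r_i$, with the remainder $r_i$ controlled by $\|\bm\eta\|^{2\delta}E(\|\epsilon_i\|^{-2-2\delta})$ via condition (A3). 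Decomposing $Z_i=E[Z_i]+\tilde Z_i$, I would write
\begin{equation*}
G_{cen,n}=\binom{n}{2}\bm\eta^\top A_\epsilon^2\bm\eta(1+o(1))+\sum_{i<j}\bigl(E[Z_i]^\top\tilde Z_j+E[Z_j]^\top\tilde Z_i\bigr)+\sum_{i<j}\tilde Z_i^\top\tilde Z_j.
\end{equation*}
The last term is treated exactly as in the $H_0$ argument, since $\tilde Z_i$ is i.i.d., mean zero, with covariance $B_\epsilon$ up to an $O(\|A_\epsilon\bm\eta\|^2)$ perturbation. The cross-term is a sum of $n(n-1)$ independent scalars of total variance $O(n^2\|\bm\eta\|^2 E(\|\epsilon_i\|^{-2}))$, which is negligible relative to $\sqrt{n(n-1)\mathrm{Tr}(B_\epsilon^2)/2}$ precisely by condition (A4).

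The main obstacle I anticipate is the bookkeeping of the remainder $r_i$ and of the higher-order terms in $E[Z_iZ_j^\top]$: each has to be shown to vanish at the correct rate using only negative moments of $\|\epsilon_i\|$, and the only tools available are conditions (A3)--(A4). No single clever identity seems to handle this at once; the argument will rest on a chain of Cauchy--Schwarz and Markov-type bounds to certify that the $O$-terms in the Taylor expansion are dominated by the $\sqrt{\mathrm{Tr}(B_\epsilon^2)}$ scaling in both the mean shift and the variance.
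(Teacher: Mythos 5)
The paper does not actually prove this lemma: it is quoted from \cite{Wang2015}, so the only in-paper point of comparison is the proof of Theorem \ref{thmh}, the distributed analogue. Your outline is essentially the original argument of \cite{Wang2015}: under $H_0$ they likewise write $G_{cen,n}=\sum_k D_k$ with $D_k=Z_k^\top\sum_{j<k}Z_j$ and apply a martingale CLT, and under $H_1$ they expand $Z_i$ around $\epsilon_i/\|\epsilon_i\|$, extract the mean shift $\binom{n}{2}\bm\eta^\top A_\epsilon^2\bm\eta\,(1+o(1))$, and kill the cross and remainder terms by variance bounds under (A3)--(A4); your gradient computation $(I_p-\epsilon_i\epsilon_i^\top/\|\epsilon_i\|^2)/\|\epsilon_i\|$ and the identification $\sum_k E[D_k^2]=n(n-1)\mathrm{Tr}(B_\epsilon^2)/2$ are both correct. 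Note that the paper's own proof of Theorem \ref{thmh} replaces the martingale CLT with a double-array CLT, which is possible there only because the block sums $G_l$ are genuinely independent across machines; for the centralized statistic no such block structure exists, so your martingale route is the right one. Two inaccuracies are worth fixing. First, the cross term is not ``a sum of $n(n-1)$ independent scalars'': summands sharing an index $i$ are correlated. Collapse it to $(n-1)E[Z_1]^\top\sum_{i}\tilde Z_i$ before bounding, which gives variance of order $n^3\|E[Z_1]\|^2\lambda_{\max}(\mathrm{cov}(\tilde Z_1))$; since $\|E[Z_1]\|^2=O(\|\bm\eta\|^2E(\|\epsilon_1\|^{-2}))$ and $\lambda_{\max}(B_\epsilon)\approx a_{\max}/a_{1,\Sigma}$, negligibility against $n^2\mathrm{Tr}(B_\epsilon^2)\approx n^2a_{2,\Sigma}/a_{1,\Sigma}^2$ is exactly what the first branch of (A4) delivers — so your conclusion stands, but the variance order you quoted is off. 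Second, the division of labor you assign — (A2)($i$) for the conditional variance, (A2)($ii$) for Lindeberg — is too clean: both conditions act jointly through Lemma 1 of \cite{Wang2015}, which uses the concentration of $\|\epsilon_i\|^2$ around $a_{1,\Sigma}$ (condition ($ii$)) to reduce moments of $Z_1^\top Z_2$ to traces of powers of $\Sigma/a_{1,\Sigma}$, after which ($i$) yields $\mathrm{Tr}(B_\epsilon^4)=o(\mathrm{Tr}^2(B_\epsilon^2))$ and $E[(Z_1^\top Z_2)^4]=O(\mathrm{Tr}^2(B_\epsilon^2))$, feeding both CLT hypotheses at once — the same fourth-moment bound the paper invokes in its proof of Theorem \ref{thmh}.
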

To discuss the asymptotic properties of the distributed statistic $G_{dis,n}$, we assume
\begin{itemize}
   \item[(A4$'$)] $\|\bm\mu-\bm\mu_0\|^{2} E\big(\|\epsilon
_i\|^{-2}\big)=o\big(\min \{k a_{2,\Sigma}/(n a_{\max}a_{1,\Sigma}) ,  (ka_{2,\Sigma})^{1/2}/(n^{1 / 2}a_{1,\Sigma})\}\big) .$
\end{itemize}
 It is seen that (A4$'$) is weaker than (A4). The asymptotic  distribution of $G_{dis,n}$ is as follows.  

\begin{theorem} \label{thmh} Let $k=k_n$ and $p=p_n$ be constants depending on $n$ and set  $n_l=n/k$.
 Assume  that $\min\{k, n_l, p\}\to \infty$ as $n\to\infty$.
\begin{itemize}
  \item[(1)]
   Under $H_0$, assuming that conditions ($i$) and ($ii$) in (A2) hold, as $n\to\infty$,
 it follows that
 $$\frac{G_{dis,n}}{\sqrt{n(n/k-1){\rm Tr}(B_\epsilon^2)/2}}\stackrel{d}{\longrightarrow} N(0,1).$$

  \item [(2)]  Under $H_1$,
assuming that conditions (A2), (A3), and (A4$'$) hold, as $n\to \infty$,
 we have
$$\frac{G_{dis,n}-n(n/k-1)(\bm\mu-\bm\mu_0)^\top A_\epsilon^2 (\bm\mu-\bm\mu_0) (1+o(1))/2}{\sqrt{n(n/k-1){\rm Tr}(B_\epsilon^2)/2}}\stackrel{d}{\longrightarrow} N(0,1).$$
\end{itemize}
\end{theorem}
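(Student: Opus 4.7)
The plan is to exploit the block-independence structure created by the data split: $G_{dis,n}=\sum_{l=1}^k G_l$ is a sum of $k$ independent summands, and each summand is structurally identical to the centralized statistic but computed on $n_l=n/k$ observations. I would therefore reduce the per-block asymptotics to the lemma of Wang (2015) quoted just above, and then aggregate the blocks via a central limit theorem for triangular arrays of i.i.d.\ random variables.

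First, because the index sets $S_l$ are disjoint and the $X_i$'s are i.i.d., the local statistics $G_1,\ldots,G_k$ are mutually independent and identically distributed, and each $G_l$ has the same $U$-statistic form $\sum_{i<j}Z_i^\top Z_j$ as $G_{cen,n}$ but restricted to the $n_l$ observations on machine $l$. Consequently, $E(G_{dis,n})=k\,E(G_1)$ and $\mathrm{Var}(G_{dis,n})=k\,\mathrm{Var}(G_1)$, so it suffices to analyze a single block and sum.

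Second, I would apply Wang (2015)'s lemma separately to each $G_l$. The key observation is that conditions (A2), (A3), (A4$'$) are precisely (A2), (A3), (A4) with the global sample size $n$ replaced by the local size $n_l=n/k$; this is exactly why the denominators in (A4$'$) acquire the extra factor $k$. Since $n_l,p\to\infty$ by hypothesis, the lemma yields, uniformly in $l$: under $H_0$, $G_l/\sigma_{n,l}\stackrel{d}{\to}N(0,1)$ with $\sigma_{n,l}^2=n_l(n_l-1)\mathrm{Tr}(B_\epsilon^2)/2$, and under $H_1$, $(G_l-m_{n,l})/\sigma_{n,l}\stackrel{d}{\to}N(0,1)$ with $m_{n,l}=n_l(n_l-1)(\bm\mu-\bm\mu_0)^\top A_\epsilon^2(\bm\mu-\bm\mu_0)(1+o(1))/2$. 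Multiplying by $k$ then reproduces exactly the centering $n(n/k-1)(\bm\mu-\bm\mu_0)^\top A_\epsilon^2(\bm\mu-\bm\mu_0)(1+o(1))/2$ and the normalization $\sqrt{n(n/k-1)\mathrm{Tr}(B_\epsilon^2)/2}$ appearing in the statement.

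Third, to upgrade these block-level CLTs into a CLT for the sum, I would invoke the Lindeberg-Feller theorem for triangular arrays of i.i.d.\ variables, and this is where I expect the main obstacle to lie. Marginal convergence of $(G_l-m_{n,l})/\sigma_{n,l}$ to $N(0,1)$ does not by itself imply uniform integrability of its square, so a quantitative moment bound is needed. My plan is to extract a Lyapunov-type bound $E[(G_l-m_{n,l})^4]=O(\sigma_{n,l}^4)$ from the $U$-statistic variance calculations underlying Wang (2015)'s proof: since $G_l$ is a bilinear form in the bounded unit vectors $Z_i$ and the moments of $Z_i^\top Z_j$ are controllable through traces of powers of $B_\epsilon$, the required fourth-moment estimate reduces to bounds on $\mathrm{Tr}(B_\epsilon^2)$ and $\mathrm{Tr}(B_\epsilon^4)$ that are already implied by condition (A2)(i). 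With this bound in hand, Lyapunov's condition with $\delta=2$ holds for the i.i.d.\ triangular array $\{G_l-m_{n,l}\}_{l=1}^{k_n}$ as $k_n\to\infty$, and the Lindeberg-Feller CLT yields both parts of Theorem \ref{thmh}.
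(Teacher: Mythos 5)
For part (1) your route coincides with the paper's: the $G_l$ are i.i.d.\ across machines, and the paper verifies exactly your Lyapunov condition with $\delta=2$, showing $E\big[|G_l/V_{n_l}|^4\big]=O(1)=o(k)$ with $V_{n_l}^2=n_l(n_l-1)\mathrm{Tr}(B_\epsilon^2)/2$, by expanding $E(G_l^4)$ over the index patterns of $Y_i=\sum_{j<i}Z_i^\top Z_j$, reducing everything via Cauchy--Schwarz to $E\{(Z_1^\top Z_2)^4\}$, and invoking Lemma~1 of Wang (2015), which gives $E\{(Z_1^\top Z_2)^4\}=O\big(\mathrm{Tr}^2(B_\epsilon^2)\big)$ under (A2)($i$) \emph{and} ($ii$) --- note you cite only (A2)($i$); ($ii$) is used there as well. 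Your observation that (A4$'$) is (A4) with $n$ replaced by $n_l=n/k$ is correct and is indeed why the theorem's centering $k\cdot n_l(n_l-1)(\bm\mu-\bm\mu_0)^\top A_\epsilon^2(\bm\mu-\bm\mu_0)(1+o(1))/2$ and scaling take the stated form.

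The genuine gap is in part (2). Under $H_1$ the unit vectors $Z_i=(X_i-\bm\mu_0)/\|X_i-\bm\mu_0\|$ are no longer centered, so your proposed fourth-moment bound $E[(G_l-m_{n,l})^4]=O(\sigma_{n,l}^4)$ does \emph{not} reduce to bounds on $\mathrm{Tr}(B_\epsilon^2)$ and $\mathrm{Tr}(B_\epsilon^4)$: the fourth central moment of the non-null $G_l$ picks up shift terms involving $\bm\mu-\bm\mu_0$ about which (A2)($i$) says nothing --- controlling them is precisely the job of (A3) and (A4$'$), which your moment estimate never invokes. Moreover, under $H_1$ neither $E(G_l)$ nor $\mathrm{var}(G_l)$ is available in closed form: the centering $m_{n,l}$ is only asymptotic (whence the $(1+o(1))$ in the statement), and the claim that the variance is still $\sim\sigma_{n,l}^2$ despite the shift is itself something to be proved, so Lyapunov for the array $\{G_l-m_{n,l}\}$ is not even well posed until those facts are established. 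The paper avoids any non-null fourth moment: it writes $Z_i$ as noise plus perturbation and decomposes $G_l=G_{l,1}+G_{l,2}+G_{l,3}$, where $G_{l,1}=\sum_{i<j}\epsilon_i^\top\epsilon_j/(\|\epsilon_i\|\,\|\epsilon_j\|)$ has exactly the null structure so that part (1) yields $\frac{1}{\sqrt k}\sum_{l}G_{l,1}/V_{n_l}\stackrel{d}{\longrightarrow}N(0,1)$; the deterministic drift emerges as $G_{l,31}=E_{n_l}$; and the remainders are killed by per-block variance bounds $\mathrm{var}(G_{l,2}/V_{n_l})=o(1)$, $\mathrm{var}(G_{l,32}/V_{n_l})=o(1)$, $\mathrm{var}(G_{l,33}/V_{n_l})=o(1)$ (Lemmas 1, 4--6 of Wang (2015), under (A3) and (A4$'$)); since the blocks are i.i.d., these per-block $o(1)$'s make the normalized sums $o_p(1)$, and Slutsky finishes. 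If you insist on your direct Lyapunov route you would end up reproving this decomposition inside the fourth-moment estimate, so the repair is to adopt it: your part (1) argument plus the paper's $H_1$ decomposition closes the gap.
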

Under the condition ($i$) in (A2), when $\min\{n_1,p\}\to \infty$ , following \cite{Chen2010} and \cite{Wang2015},
 ${\rm Tr}(B_\epsilon^2)$ can be estimated using only the data in the first  machine, specifically, 
\begin{align}
\widehat{{\rm Tr}(B_\epsilon^2)}=&-\frac{n_1}{(n_1-2)^2}+\frac{(n_1-1)}{n_1(n_1-2)^2}{\rm Tr}\bigg\{\bigg(\sum_{j\in S_1} Z_j Z_j^\top\bigg)^2\bigg\}    \notag \\
&+\frac{1-2n_1}{n_1(n_1-1)}\bar Z^{*\top}\bigg(\sum_{j\in S_1} Z_j Z_j^\top \bigg)\bar Z^{*}  +\frac{2}{n_1}\|\bar Z^{*}\|^2+\frac{(n_1-2)^2}{n_1(n_1-1)}\|\bar Z^{*}\|^4,   \notag
\end{align}
where $\bar Z^{*}=(n_1-2)^{-1}\sum_{i\in S_1}Z_i$,
We  compare the centralized test with the distributed one in terms of the asymptotic relative efficiency. Let  $\eta_p=(\bm\mu-\bm\mu_0)^\top A_\epsilon^2(\bm\mu-\bm\mu_0)(1+o(1))/\sqrt{2{\rm Tr}(B_\epsilon^2)}$.
Theorem \ref{thmh} implies that under $H_1$, the distributed test  with sample size $n$ has the power
$$\psi_n^{dis}=1-\Phi(z_{1-\alpha/2}-\sqrt{n(n/k-1)}\eta_p)+\Phi(z_{\alpha/2}-\sqrt{n(n/k-1)}\eta_p).$$
On the other hand, by \cite{Wang2015},  the centralized statistic $G_{cen,N}$ with sample size $N$ has the power
$$\psi_N^{cen}=1-\Phi(z_{1-\alpha/2}-\sqrt{N(N-1)}\eta_p)+\Phi(z_{\alpha/2}-\sqrt{N(N-1)}\eta_p).$$
 Then we describe the asymptotic relative efficiency of the distributed test to the centralized one by the ratio of the sample sizes $N/n$ such that $\psi_N^{cen}=\psi_n^{dis}$, that is,   $N/n=1/\sqrt{k}$.
Thus,  the distributed test is  less efficient than that of  the centralized one. 

Combining the results in Section 2 and 3 together we derive the following conclusions. For the distributed test statistic constructed using  the divide and conquer method, there is a tradeoff between the efficiency represented by the  power  and the communication cost.  On the other hand, when point estimations are concerned, many works have shown that in terms of means square error (MSE), the estimators constructed using divide and conquer can be the same order as that of the centralized estimators \citep[etc.]{Chen2014,Zhang2015,Michael2019,Volgushev2019,Chen2018}.

\section{Numerical Experiments}
\subsection{Simulation results}
We illustrate the performance of the distributed  tests and compare them with that of the centralized ones by simulations. For any integer $m$, denote by $\mathbf{0}_m$ the vector of zero in $\mR^m$, and $\mathbf{1}_m$  the vectors in  $\mR^m$ with elements 1.
Let $\bm\mu_0=\mathbf{0}_p$, and consider the following hypothesis testing problem
$$ H_0: \bm\mu=\bm\mu_0~~\text{vs}~~ H_1: \bm\mu\neq\bm\mu_0.$$
For both distributed and centralized tests,  we repeat  500 times to  compute the type I error and the means of the power.

{\bf Example 1}. We consider the case of  $pk<n$. Let $\{X_i\}_{i=1}^{n}$ be $i.i.d.$ $p$-dimensional random vectors  from  the normal distribution  $N_p( \bm\mu,\Sigma)$, where  $E(X_i)=\bm\mu\in\mR^p$ and $\cov(X_i)=\Sigma\in\mR^{p\times p}$.  With the notation  $\tilde\Sigma=(\tilde\sigma_{ij})$ with   $\tilde\sigma_{ij}=0.5^{|i-j|}$, we assume that $\bm\mu$ and $\Sigma$  are generated from the
following four cases:   (1) $ \bm\mu=(c\mathbf{1}_2^\top, \mathbf{0}_{p-2}^\top)^\top$ and $\Sigma=I_p$;    (2) $\bm\mu=c\mathbf{1}_p$ and  $\Sigma=I_p$; (3) $ \bm\mu=(c\mathbf{1}_2^\top, \mathbf{0}_{p-2}^\top)^\top$ and  $\Sigma=\tilde\Sigma$;  (4) $\bm\mu=c\mathbf{1}_p$ and $\Sigma=\tilde\Sigma$, where    $c$ is a constant, and different values of $c$ will be considered.

 We set $p=50$ and 100 respectively. For the distributed test, we set $k=30,50$, and 70, respectively, and split the data   randomly into $k$   machines with equal sizes.  For each setting, we estimate the sizes of the test   by   the frequency of rejection in
 500 replicas.  Simulation results on the type I error for $k=30$ are presented in Table \ref{Table1}, and those of  $k=50$ and $k=70$ are similar and are omitted.

\begin{table}[ht]
\begin{center}
\begin{tabular}{c|cccc}
\hline
                             & \multicolumn{2}{c}{$\Sigma=I_p$} & \multicolumn{2}{c}{$\Sigma=\tilde\Sigma$} \\
     $n$                        &  $p=50$    &  $p=100$  &  $p=50$    &  $p=100$  \\ \hline
 5000  &       0.060      &   0.058        &       0.056      &   0.056          \\
  8000  &    0.048         &        0.052    &          0.048   &         0.050   \\
  10000 &       0.052      &         0.052   &          0.048   &          0.052  \\ \hline
\end{tabular}
\caption {Type I error of the distributed  test under $H_0$ for  $k=30$ with significant level $\alpha=0.05$ for Example 1.\label{Table1} }
\end{center}
\end{table}

The simulation results  with $n=10000$ and $p=50$ for different $k$ are presented in Figure \ref{fig2}, and results on $p=100$ are similar and are omitted.
From Figure \ref{fig2}, it is seen that the power of  distributed  test increases to 1, as $c$ increases. In addition, the distributed test is less powerful than the centralized one, which coincides with the theoretical results.

\begin{figure}[htb]
\begin{minipage}[t]{0.5\linewidth}
\centering
\includegraphics[width=8cm,height=6cm]{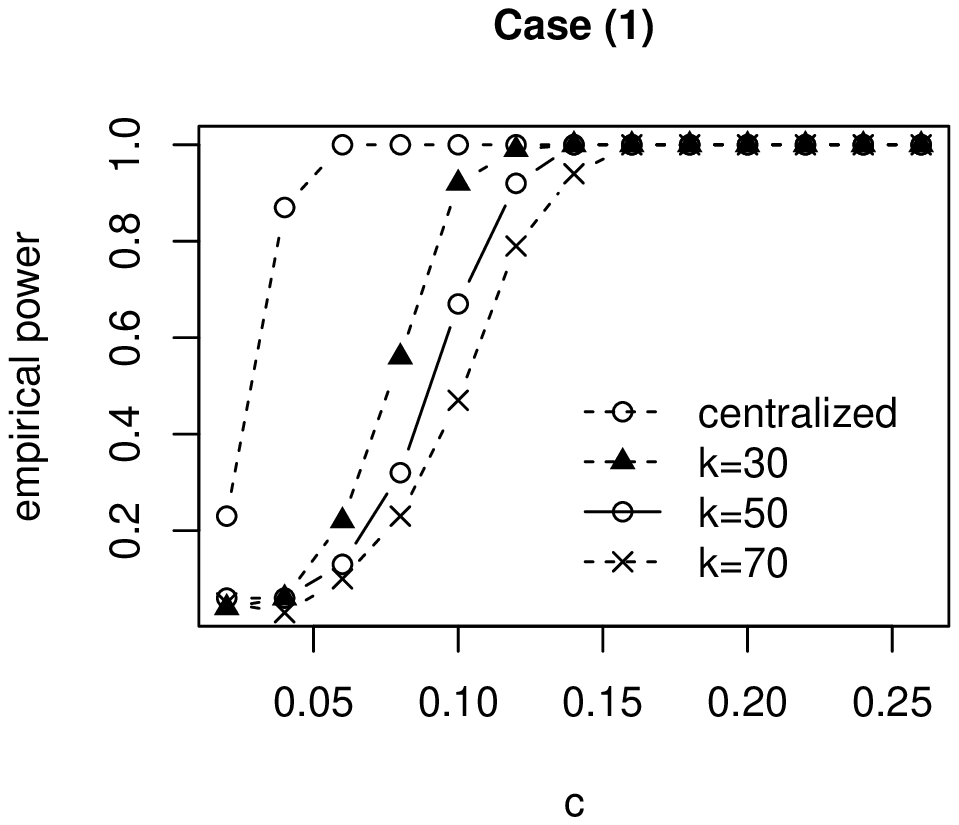}
\end{minipage}
\hfill
\begin{minipage}[t]{0.5\linewidth}
\centering
\includegraphics[width=8cm,height=6cm]{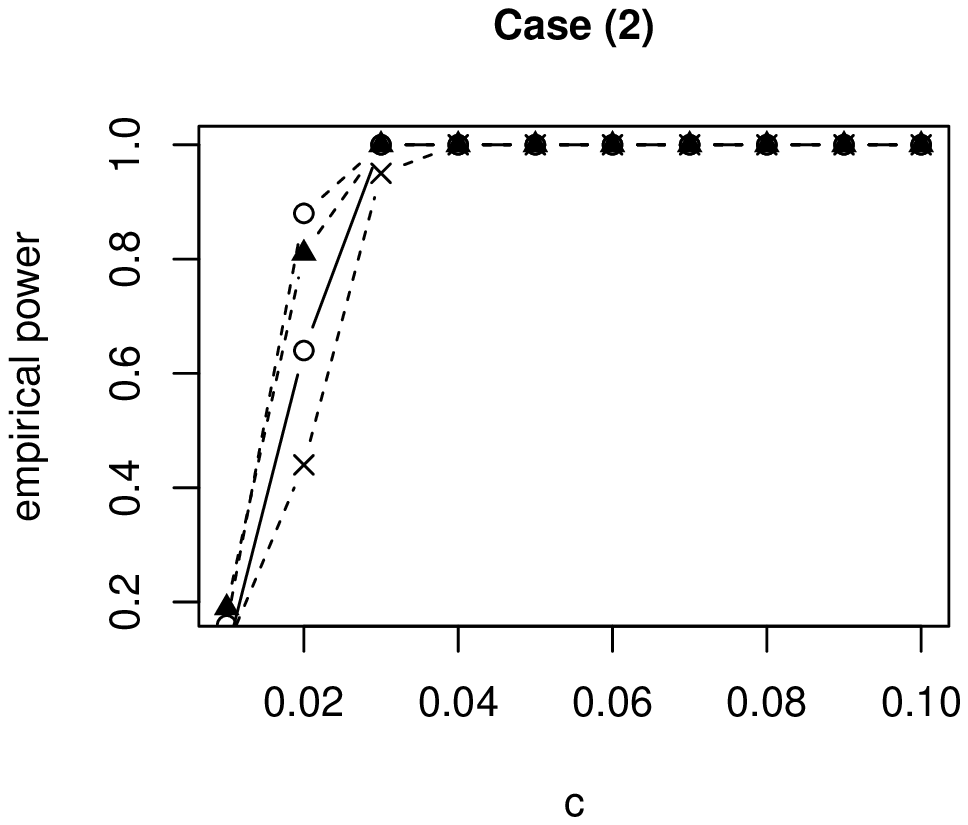}
\end{minipage}
\begin{minipage}[t]{0.5\linewidth}
\centering
\includegraphics[width=8cm,height=6cm]{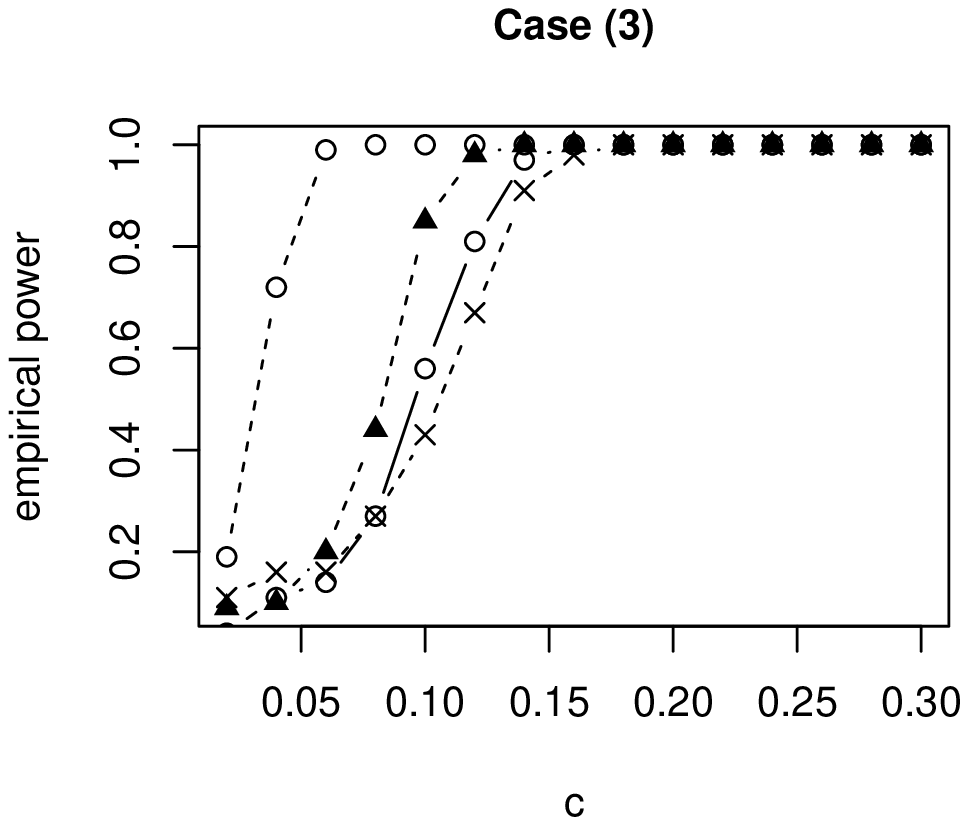}
\end{minipage}
\hfill
\begin{minipage}[t]{0.5\linewidth}
\centering
\includegraphics[width=8cm,height=6cm]{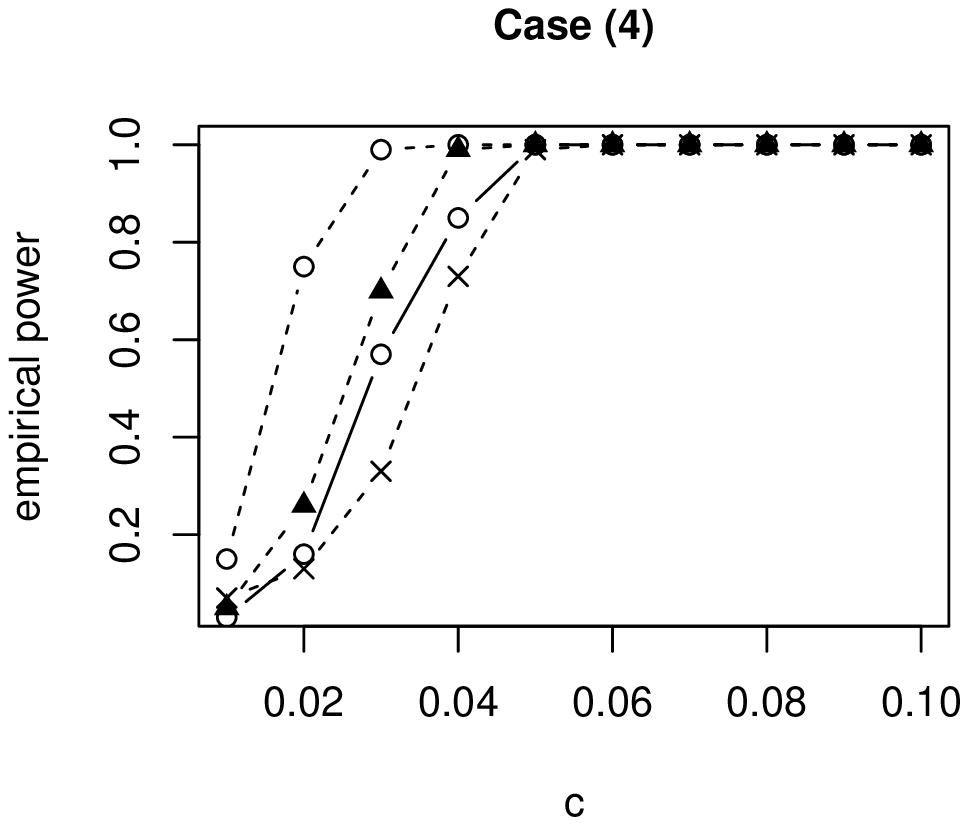}
\end{minipage}
\caption{Comparison of the powers of the centralized test with those of the distributed one in Example 1  for different $k$ with $n=10000$ and  $p=50$. \label{fig2}}
\end{figure}

{\bf Example 2}. We consider the high-dimensional cases where  $p>n$.  Let  $\{X_i\in\mR^p, 1\le i\le n\}$  be $i.i.d.$ sample  with the mean $E(X_i)=\bm\mu$ and covariance $\Sigma=(\sigma_{ij})$. Let   $p=1000$ and   $n=900$. Moreover, we take  $k=10,20$ and 30, and split the data   randomly into $k$   machines such that each machine has samples of size $n/k$, which  is  much smaller than $p$.    Let  $\bm\mu =(c\mathbf{1}_{20}^\top,\mathbf{0}_{p-20}^\top)^\top$ with $c=0,0.25,0.35,0.45$, respectively,  and  $c=0$ implies that  $H_0$ is true.
$\Sigma$ is taken as either $\Sigma_1$ or $\Sigma_2$, defined as    $\Sigma_1=(\sigma_{1,ij})$ with    $\sigma_{1,ij}=0.8^{|i-j|}$,  and  $ \Sigma_2=(\sigma_{2,ij})$ with $\sigma_{2,ii}=1$ and $\sigma_{2,ij}=0.2$ for $i\neq j$.
We consider two cases: (1) $X_i$ follows the normal distribution $N_p(\bm\mu,\Sigma)$.  (2) $X_i$ follows the  multivariate $t$ distribution $t_\nu(\bm\mu,\Sigma)$ with the degrees of freedom $\nu=3$.

Simulation results are presented in Table \ref{Table 2}. From Table \ref{Table 2}, it is seen that the type I error is controlled well for both methods.  In addition, we see that the powers of the distributed test statistics  decrease as $k$ grows and is lower than those of the centralized ones, which coincides with the theoretical findings.
\begin{table}
\begin{center}
\begin{tabular}{llllllllll}
\hline
          &   & \multicolumn{4}{c}{$X_i\sim N(\mu,\Sigma)$} & \multicolumn{4}{c}{$X_i\sim t_{3}(\mu,\Sigma)$} \\ \cmidrule(r){3-6}\cmidrule(r){7-10}
                   & $c$         & Cen   & $k=10$   & $k=20$   & $k=30$  & Cen   & $k=10$   & $k=20$   & $k=30$ \\ \hline
\multirow{4}{*}{}  $\Sigma=\Sigma_1$      &0   & 0.058      & 0.038     &  0.048    & 0.050     &0.050   & 0.046    & 0.038     &  0.030      \\
                   & 0.25  & 1      &0.926      &0.718      & 0.538  & 1 &0.928     & 0.740     &0.588     \\
                       &  0.35 &  1    &1& 0.996 &0.984   & 1      &1      &0.996      &0.980   \\
                       &  0.45 & 1      &  1    & 1     & 1  & 1      &  1    & 1     & 1   \\
\multirow{4}{*}{}  $\Sigma=\Sigma_2$      &0   &0.048 &0.064       & 0.056     &0.064  & 0.064      &0.052      &0.048      &0.064    \\
                      & 0.25  &  1     & 0.288     & 0.200     & 0.168  & 1 &0.296     & 0.244     &0.196   \\
                      & 0.35  &   1    &  0.860    & 0.596     & 0.404  & 1      &0.876      & 0.532     & 0.380    \\
                      & 0.45  &    1   & 1     & 0.980     & 0.820 &1       & 1     &  0.956    & 0.796    \\ \hline
\end{tabular}
\caption{Type I error (i.e. $c=0$) and the powers of the centralized test and distributed one for Example 2, where `Cen' stands for  the centralized test and the columns with different values of $k$ are of the distributed test. \label{Table 2}}
\end{center}
\end{table}

\subsection{Real data}
We compare the distributed and centralized tests on the Statlog (Landsat Satellite) dataset from the UCI machine learning repository \citep{Dua2019},  a   multi-spectral data  having  36 covariates  and 6 class labels. The covariates   contain the pixel values of the four spectral bands of each of the 9 pixels in the $3\times3$ neighbourhood in a satellite image, and the response is  the classification label of the central pixel.  This  data have been analyzed by several authors \citep[etc.]{Tang2005,Kim2012}.  In multi-spectral data analysis,    predictors in each class are generally viewed as  normal vectors  in the literature of geography and biology  \cite[etc.]{Pickup1993,Bauer2011,Natsagdorj2017}.    For this data, classes 3 and 7 represent  grey soil and very damp grey soil with the sample size being 1357 and 1508, respectively. After applying  the Box-Cox transformations on variables of indices 9, 24, 35, and 36, we find that the data in each  class   are approximately normal.
 Denoting  by $\bm\mu_g$ the mean of class $g$ with $g\in\{3,7\}$, we aim to  test $H_0: \bm\mu_3-\bm\mu_7=0$.

We select $n=1000$ observations at random  for each class, denoted as $\{X_{gi},i=1,\cdots,n, g=3,7\}$, and     construct the observations  $Z_{i}=X_{3i}-X_{7i}, 1\le i\le n$.
The problem we  considered becomes  $H_0: E(Z_{i})=0$. We apply the centralized Hotelling $T^2$ test and the distributed  one  with $k=15$ and $25$, respectively, and compute the $p$-values. Repeating  the procedure 100 times,  the  results of the average of the  $p$-values   are presented in the column associated with  $\delta=0$  in Table \ref{Table 3}.

Furthermore, to compare the powers of the centrialized test with those of the distributed one, we shift the means such that two populations are close to each other.  Reminding   that there are 357 and 508 observations  left   for classes 3 and 7, respectively, then  for $g=3$ and 7, $\bm\mu_g$ can be estimated by the sample mean of these remaining  observations in class $g$, denoted as $\bar{\bm\mu}_g$, which is independent of $Z_i$'s.  Then we  shift the mean $\bm\mu_3$ of class 3 to $\bm\mu_3-\delta(\bar{\bm\mu}_3-\bar{\bm\mu}_7)$ with $0\le \delta<1$, which is equivalent to shifting  $Z_i$ to  $Z_{i,\delta}=Z_i-\delta(\bar{\bm\mu}_3-\bar{\bm\mu}_7)$. Clearly, when   $ \bm\mu_3\ne\bm\mu_7$,     distinguishing  two classes   becomes more difficult for large $\delta$. We repeat the procedure for 100 times and compute the average of the $p$-values. The results are presented in Table 3.  It is seen that the centralized test performs better when $\delta$ is close to 1.  

\begin{table}[htb]
\begin{center}
\begin{tabular}{llllllll}
\hline
Methods                                 &        & \multicolumn{6}{c}{$\delta$}   \\ \cline{3-8}
                                        &        &0     & 0.3  & 0.6    & 0.7       & 0.8   & 0.9 \\ \hline
 Centralized                            &        &0     &0   &0   &0       &0          &1.4e-8          \\
 Distributed                            &$k=15$  &0  &0    &0     &0       &2.2e-6          &1.6e-2    \\
                                        & $k=25$ &0    &0     &5.6e-5&3.2e-3   &0.0649 &0.1474     \\ \hline
\end{tabular}
\caption { The average of $p$-values over 100  replicas, where 1.4e-8 represents $1.4\times 10^{-8}$ and other quantities are defined similarly.\label{Table 3} }
\end{center}
\end{table}

\section{Discussion}
In this paper, we consider the one sample mean testing in distributed frameworks, extending Hotelling $T^2$ test and the statistic of \cite{Wang2015}, respectively. Obviously, the divide and conquer technique can be applied to many other statistics for one sample mean test, including  those, are developed for high dimensional cases. For distributed tests constructed from the divide and conquer technique, it is observed that there is a fundamental tradeoff between the communication costs and the powers of the test.

\section{Appendix}
\subsection{Proof of Theorem \ref{Th1}}
Recalling that
$T_l ^2\sim T^2 (p,n_l-1)$ and using the connection between Hotelling $T^2$ distribution and the $F$-distribution, we have
\[\frac{n_l -p}{(n_l -1)p}T_l ^2\sim
F(p,n_l -p),\qquad  l=1,\cdots, k.\]
Then, it is easy to derive that
\[E(T_l ^2/\sqrt p)=\frac{(n_l -1)\sqrt p}{n_l -p-2},\quad \var(T_l ^2 /\sqrt p)=\frac{2(n_l -1)^2(n_l -2)}{(n_l -p-2)^2(n_l -p-4)}.\]
We now establish the asymptotic normality of $\sqrt k T_{dis,n}^2$ by verifying the following two conditions (B1) and (B2) required by the Lyapunov  central limit theorem  \citep{Billingsley2008}.

\begin{itemize}
  \item[ (B1)]  $\var( T_l^2/\sqrt{p})$ is finite.
  \item[ (B2)]  Lyapunov condition: for some $\delta>0$,
\[\lim_{k\to\infty}\frac{1}{s_k^{2+\delta}}\sum_{l=1}^k E\left[\left|T_l^2 / \sqrt p-E\left(T_l^2 / \sqrt p\right)\right|^{2+\delta}\right]=0,\]
where $s_k=\big[\sum_{l=1}^k \mathrm{var}(T_l^2/\sqrt p)\big]^{1/2}.$
\end{itemize}
It is easy to see that (B1) holds. In fact, it follows from (A1) that
\[\var(\sqrt k T_{dis,n}^2)=\var\left(T_l ^2/\sqrt p\right)=\frac{2(n_l -1)^2(n_l -2)}{(n_l -p-2)^2(n_l -p-4)} \to \frac{2}{(1-r)^3}.\]
Thus $\sqrt k T_{dis,n}^2$ has finite variance as $r\in[0,1)$.

We now verify  condition (B2) for  $\delta=2$.
For a positive integer $\delta$,  denote   order-$\delta$ central moment of distribution $F(p,n_l-p)$ as $a_\delta$. Recalling  that $(n_l -p)[(n_l -1)p]^{-1}T_l ^2\sim F(p,n_l-p)$, then the order-4 central moment of $T_l ^2/\sqrt p$ can be written as
\[E\left[\left|T_l^2 / \sqrt p-E\left(T_l^2 / \sqrt p\right)\right|^4\right]=\left[\frac{(n_l-1)\sqrt p}{n_l-p}\right]^4 a_4.\]
Moreover, due to  the fact that the  kurtosis $K$ of  $F(p,n_l-p)$ equals $a_4/a^2_2$,  it follows that
\[a_4=Ka_2^2=K\left[\frac{2(n_l-p)^2(n_l-2)}{p(n_l-p-2)^2(n_l-p-4)}\right]^2.\]
  When $n_l-p>8$, it holds that
 \[K=\frac{12\{p[5(n_l-p)-22](n_l-2)+(n_l-p-4)(n_l-p-2)^2\}}{p(n_l-p-6)(n_l-p-8)(n_l-2)}+3.\]
Under condition (A1), we have $K=3+O(1/p)$ and consequently
\[E\left[\left|T_l^2 / \sqrt p-E\left(T_l^2 /\sqrt p \right)\right|^4\right]=O(1).\]
Thus,
\[\frac{1}{s_k^4}\sum_{l=1}^kE\left[\left|T_l^2 / \sqrt p-E\left(T_l^2 / \sqrt p\right)\right|^4\right]=O(1/k).\]
This verifies the condition (B2).
As $k\to \infty$, the conclusion is derived by the Lyapunov central limit theorem. This completes the proof.

\subsection{Proof of Theorem \ref{Th2}}
(1) We first prove the conclusions on the distributed test statistic.

Recall the asymptotic distribution (\ref{T_p-norm}) of $\sqrt k T_{dis,n}^2$ under $H_0$.  Let
\begin{equation}
E_0=\frac{(n_l-1)\sqrt p}{n_l-p-2},\qquad
V_0^2=\frac 2 {(1-r)^3}.\label{e1}
\end{equation}
Under $H_1$, the power function of distributed test has the following  form,
\[\psi_{dis}=P\left(V_0^{-1}\left|\sqrt {k} \left( T_{dis,n}^2-E_0\right)\right|>z_{1-\frac{\alpha}{2}}  \bigg |   H_1\right),\]
where $ T_{dis,n}^2 =  k^{-1}\sum_{l=1}^k T_l^2/\sqrt p$.
Under $H_1$, $(n_l -p)[(n_l -1)p]^{-1}T_l ^2$ follows a  non-central $F$-distribution $F(p,n_l-p,n_l\Delta)$ for $l=1,\cdots,k$. Thus it follows that
\begin{equation}
E(T_l^2)=\frac{(n_l-1)(p+n_l\Delta)}{n_l-p-2},\qquad
\var(T_l^2)=2(n_l-1)^2\frac{(p+n_l\Delta)^2+(p+2n_l\Delta)(n_l-p-2)}{(n_l-p-2)^2(n_l-p-4)}.\notag
\end{equation}
Under condition (A1),  it holds that, as $n\to\infty$,
\[\frac{\var(T_l^2)}  {p+n_l\Delta}\to \frac{2(\Delta^2+2\Delta+r)} {(r+\Delta)(1-r)^3}.\]
And we denote
$$V_1^2=\frac{2(\Delta^2+2\Delta+r)} {(r+\Delta)(1-r)^3}.$$
Recall $n_l=n/k, l=1,\cdots,k$.  Let $b_{n,\Delta}=\sqrt{p/(p+n_l\Delta)}$ and define
\[\widetilde{T}_{dis,n}^2:=b_{n,\Delta} T_{dis,n}^2= \frac{1}{k}\sum_{l=1}^k \frac{T_l^2}{\sqrt{p+n_l\Delta}},\]
which is a sum of  $i.i.d.$ random variables.  
The normality of $\widetilde{T}_{dis,n}^2$ can be derived by verifing the Lyapunov condition: for some $\delta>0$,
\begin{equation}\label{Condition}
\lim_{k\to\infty}\frac{1}{\tilde s_k^{2+\delta}}\sum_{l=1}^k E\left[\left|T_l^2 / \sqrt{p+n_l\Delta}-E\left(T_l^2 / \sqrt{p+n_l\Delta}\right)\right|^{2+\delta}\right]=0,
\end{equation}
where $\tilde s_k=\big[\sum_{l=1}^k \mathrm{var}(T_l^2/\sqrt{p+n_l\Delta})\big]^{1/2}.$ 
For $\delta=2$, we have $\tilde s_k^4=k^2 V_1^4(1+o(1))$, as $n\to\infty$.
Recall that $(n_l -p)[(n_l -1)p]^{-1}T_l ^2$ follows a  non-central $F$-distribution. Then following \cite{Patnaik1949}, the first four moments about the origin of the non-central $F$-distribution, we can derive that $E(|T_l^2 / \sqrt{p+n_l\Delta}-E(T_l^2 / \sqrt{p+n_l\Delta})|^{4})=O(1)$, as $n\to\infty$. Then we have
\[\tilde s_k^4\sum_{l=1}^k E\left[\left|T_l^2 / \sqrt{p+n_l\Delta}-E\left(T_l^2 / \sqrt{p+n_l\Delta}\right)\right|^4\right]=O(1/(kV_1^4)).\]
This verifies (\ref{Condition}).
Then as $k\to \infty$, we obtain the asymptotic normality
\[\sqrt k \left(\widetilde{T}_{dis,n}^2-E_1\right)   \stackrel{d}{\longrightarrow}N\left(0,V_1^2\right),\]
where
\begin{equation}
E_1=\frac{E(T_l^2)}{\sqrt{p+n_l\Delta}}=\frac{ (n_l-1)(p+n_l\Delta)}{\sqrt p(n_l-p-2)}.\label{e2}
\end{equation}
Under $H_1$, as $\min\{k,n\}\to \infty$, the power function of the distributed test has the following form,
\begin{align}\label{phi-dis}
\psi_n^{dis}&=P\left(V_0^{-1}\left|\sqrt k \left( T_{dis,n}^2-E_0\right)\right|>z_{1-\frac{\alpha}{2}}  \bigg |  H_1\right) \notag \\
&=P\left(V_1^{-1}\left|\sqrt k \left(\widetilde{T}_{dis,n}^2-E_1\right)+\sqrt k \left(E_1-b_{n,\Delta}E_0\right)\right|>V_1^{-1}V_0 b_{n,\Delta} z_{1-\frac{\alpha}{2}}  \bigg |  H_1\right) \notag \\
&\to 1-\Phi\left(V_1^{-1}V_0 b_{n,\Delta} z_{1-\frac{\alpha}{2}}-V_1^{-1} \sqrt {k}\left(E_1-b_{n,\Delta}E_0\right)\right)  \notag \\
&\qquad+\Phi\left(V_1^{-1}V_0 b_{n,\Delta} z_{\frac{\alpha}{2}}-V_1^{-1} \sqrt {k}\left(E_1-b_{n,\Delta}E_0\right)\right).
\end{align}
Substituting  (\ref{e1}) and (\ref{e2}) into (\ref{phi-dis}), the right hand side of (\ref{phi-dis})  can be denoted as
\[\phi_{\Delta,\gamma_{n}}=1-\Phi\left(A_{1n}z_{1-\frac{\alpha}{2}}-A_{2n}\sqrt{(1-\gamma_n)/2}\right)+\Phi\left(A_{1n}z_{\frac{\alpha}{2}}-A_{2n}\sqrt{(1-\gamma_n)/2}\right),\]
where
\[A_{1n}=\left(\frac{\gamma_{n}}{\gamma_{n}+2\Delta+\Delta^2}\right)^{1/2}, \qquad A_{2n}=\left(\frac {n\Delta^2}{\gamma_{n}+2\Delta+\Delta^2}\right)^{1/2}.\]
Then we have $\psi_n^{dis}-\phi_{\Delta,\gamma_{n}}\to0$.

(2) We prove the conclusions on the centralized test statistic.

First denoting
\[\widetilde T_{cen,n}^2:=\frac{n-p}{(n-1)p}T_{cen,n}^2,\]
the power function of the centralized test under $H_1$ can be written as
\begin{align}\label{eqcen}
\psi_n^{cen}&=P\left(\widetilde T_{cen,n}^2>F_{1-\alpha}(p,n-p)\Big| H_1\right).
\end{align}
Since $\widetilde T_{cen,n}^2$ follows the non-central $F$-distribution $F(p,n-p,n\Delta)$ under $H_1$,
  it is easy to see that
\[E\left(\widetilde T_{cen,n}^2\right)=\frac{n-p}{(n-p-2)p}(p+n\Delta),\]
\[\var\left(\widetilde T_{cen,n}^2\right)=\frac{2(n-p)^2}{(n-p-2)^2}\left[\frac{(p+n\Delta)^2}{(n-p-4)p^2}+\frac{(n-p-2)(p+2n\Delta)}{(n-p-4)p^3}\right].\]
Then denote
\[\widetilde V:=\left[\var\left(\widetilde T_{cen,n}^2\right)\right]^{1/2}\]
as the standard deviation of $F(p,n-p,n\Delta)$.

Thus the power of centralized test statistic $\psi_n^{cen}$ in (\ref{eqcen}) can be written as
\begin{align}
\psi_n^{cen}&=P\left(\widetilde V ^{-1}\widetilde T_{cen,n}^2>\widetilde V ^{-1}F_{1-\alpha}(p,n-p)\Big|H_1\right).  \notag
\end{align}
As $n\to\infty$, a variable following   $F(p,n-p)$ can be denoted as $1+ O_p(1/\sqrt p)$. Therefore $F_{1-\alpha}(p,n-p)$ can be written as $1+c/\sqrt p$, where $c$ is a bounded constant.  Then when $\Delta\gg\sqrt p/n$, one  can verify easily that, 
\begin{align}\label{ine}
E\left(\widetilde V ^{-1}\widetilde T_{cen,n}^2\right)\gg
\widetilde V ^{-1}F_{1-\alpha}(p,n-p),\notag
\end{align}
and that the variance of $\widetilde V ^{-1}\widetilde T_{cen,n}^2$ equals 1. Therefore,  the centralized test power $\psi_n^{cen}$ tends to 1, as $n\to \infty$.  $\blacksquare$

\subsection{Proof of Theorem \ref{thmh}}

According to  \cite{Wang2015}, it follows that
$$\var(G_{l})=n_l(n_l-1){\rm Tr}(B_\epsilon^2)/2:=V^2_{n_l}, \qquad  l=1,\cdots, k.$$
Denote $$E_{n_l}=n_l(n_l-1)(\bm\mu-\bm\mu_0)^\top A_\epsilon^2(\bm\mu-\bm\mu_0)(1+o(1))/2.$$


(1) We prove the conclusion under $H_0$.

 Under $H_0$, $G_{l}=\sum_{i,j\in S_l,j<i}Z^\top_i Z_j, 1\le l\le k$ are $i.i.d.$ variables, satisfying  $E(G_{l})=0$ and $\var(G_{l})=V_{n_l}^2$. Recall  that $p=p_n$, $k=k_n$, and $n_l=n/k$ are functions of $n$. Then  $\{G_{l},1\le l\le k, n_l=1,2,\cdots\}$ is a double array.
To  derive  the  asymptotic normality  of
$$\frac{G_{dis,n}}{\sqrt{k}V_{n_l}}=\frac{1}{\sqrt{k}}\sum_{l=1}^k \frac{G_{l}}{V_{n_l}},$$
we apply   the central limit theorem of double array  \citep{Anderson1984}. To this end, it is sufficient to check the following  condition:
\begin{equation}\label{Con}
\sum_{l=1}^k E\left(\left|\frac{G_{l}}{V_{n_l}}\right|^{2+\delta}\right)=o(S_k^{2+\delta}),\quad n\to\infty,
\end{equation}
for some $\delta>0$, where $S^2_k=\sum_{l=1}^k\var(G_{l}/V_{n_l})=k$.
We will verify   (\ref{Con})   with  $\delta=2$, that is,
\begin{equation}\label{Con2}
E\left(\left|\frac{G_{l}}{V_{n_l}}\right|^{4}\right)=o(k), \quad n\to\infty.
\end{equation}

It is sufficient to check (\ref{Con2}) for  $l=1$, since $G_{l}$'s are $i.i.d.$ variables.
For simplicity, we denote  the indices of the observations in the first machine as $\{1,2,\cdots, n_1\}$. Denote $G_{1}=\sum_{i=2}^{n_1}Y_i$, where $Y_i=\sum_{j=1}^{i-1} Z_i^\top Z_j$. Noting that $E(Z_i)=0$ and $E(Z_i^\top Z_j)=0$ for $i\ne j$ under  $H_0$, and consequently that  $E(Y_{i_1}Y_{i_2}Y_{i_3}Y_{i_4})=0$ for any indices  $i_1\ne i_2\ne i_3\ne i_4$, then it follows that
\begin{align}
E(|G_{l}|^4) &=\sum_{i_1=2}^{n_1} E(Y_{i_1}^4)+\sum_{\substack{i_1,i_2=2,\\ i_1\neq i_2}}^{n_1}E(Y_{i_1}Y_{i_2}^3)+ \sum_{\substack{i_1,i_2=2,\\ i_1\neq i_2}}^{n_1}E(Y_{i_1}^2 Y_{i_2}^2)+\sum^{n_1}_{\substack{i_1,i_2,i_3=2, \\ i_1\neq i_2\neq i_3}}E(Y_{i_1}Y_{i_2}Y_{i_3}^2). \notag
\end{align}
Under the conditions  ($i$) and ($ii$) in (A2), following \cite{Wang2015}, we have $\sum_{i_1=2}^{n_1} E(Y_{i_1}^4)/V_{n_1}^4 \to 0$, as $n_1$, $p\to\infty$.
Noting that $Z_iZ_j$ and $Z_iZ_s$ for any $i\ne j\ne s$ follow the same  distribution, by applying  the Cauchy-Schwartz inequality, it follows  that
\begin{align}
\sum_{\substack{i_1,i_2=2,\\ i_1\neq i_2}}^{n_1}E(Y_{i_1}Y_{i_2}^3)&=6\sum_{i_1<i_2}(i_1-1)E\{Z_1^\top Z_2 Z_1^\top Z_3 (Z_2^\top Z_3)^2\}  \notag\\
&\le 6\sum_{i_1<i_2}(i_1-1)E\{(Z_1^\top Z_2)^4\} \le 6n_1^3E\{(Z_1^\top Z_2)^4\}, \notag
\\
\sum_{\substack{i_1,i_2=2,\\ i_1\neq i_2}}^{n_l}E(Y_{i_1}^2 Y_{i_2}^2)=&8\sum_{i_1<i_2}(i_1-1)(i_2-2)E(Z_1^\top Z_2 Z_1^\top Z_3 Z_2^\top Z_4 Z_3^\top Z_4)  \notag  \\
&+4\sum_{i_1<i_2}(i_1-1)\Big[E\{(Z_1^\top Z_2)^2  (Z_3^\top Z_4)^2\}
+E\{(Z_1^\top Z_2)^2 Z_1^\top Z_3 Z_2^\top Z_3\}   \notag\\
&+E\{(Z_1^\top Z_2)^2  (Z_2^\top Z_3)^2\}\Big] \notag\\
\le &Cn_1^4E\{(Z_1^\top Z_2)^4\}, \qquad \text{for some constant~} C>0,  \notag
\\
\sum^{n_l}_{\substack{i_1,i_2,i_3=2, \\ i_1\neq i_2\neq i_3}}E(Y_{i_1}Y_{i_2}Y_{i_3}^2)&=8\sum_{i_1<i_2<i_3}(i_1-1)E(Z_1^\top Z_2 Z_3^\top Z_2 Z_4^\top Z_1 Z_4^\top Z_3)\notag\\
&\le 8n_1^4E\{(Z_1^\top Z_2)^4\}.  \notag
\end{align}
Under the conditions  ($i$) and ($ii$) in (A2), and by Lemma 1 in \cite{Wang2015}, we have $E\{(Z_1^\top Z_2)^4\}=O(E^2\{(Z_1^\top Z_2)^2\})=O({\rm{Tr}}^2(B^2))$. Then as $\min\{n_l,p,k\}\to\infty$,
$$E\left(\left|\frac{G_{1}}{V_{n_1}}\right|^{4}\right)=O(1)=o(k).$$
Therefore, (\ref{Con2}) holds.
Then by central limit theorem (CLT) of a double array, as $\min\{n_l,k,p\}\to \infty$, we have
$$\frac{G_{dis,n}}{\sqrt{n(n/k-1){\rm{Tr}(B_\epsilon^2)/2}}}=\frac{1}{\sqrt k}\sum_{l=1}^k \frac{G_{l}}{V_{n_l}}\stackrel{d}{\longrightarrow} N(0,1).$$

(2) Under $H_1$, following \cite{Wang2015}, it holds that
\begin{align}
G_{l}&=\sum_{\substack{i,j\in S_l,\\j<i}}\left\{\frac{\epsilon_i}{\|\epsilon_i\|}+\left(\frac{\epsilon_i+\bm\mu-\bm\mu_0}{\|\epsilon_i+\bm\mu-\bm\mu_0\|}-\frac{\epsilon_i}{\|\epsilon_i\|}\right)\right\}^\top
\left\{\frac{\epsilon_j}{\|\epsilon_j\|}+\left(\frac{\epsilon_j+\bm\mu-\bm\mu_0}{\|\epsilon_j+\bm\mu-\bm\mu_0\|}-\frac{\epsilon
_j}{\|\epsilon_j\|}\right)\right\}      \notag\\
&=G_{l,1}+G_{l,2}+G_{l,3},\notag
\end{align}
where
\begin{align}
G_{l,1}&=\sum_{i,j\in S_l, j<i}\frac{\epsilon_i^\top\epsilon_j}{\|\epsilon_i\| \|\epsilon_j\|},\notag   \qquad
G_{l,2}=\sum_{i,j\in S_l, j\neq i}\left(\frac{\epsilon_i+\bm\mu-\bm\mu_0}{\|\epsilon_i+\bm\mu-\bm\mu_0\|}-\frac{\epsilon_i}{\|\epsilon_i\|}\right)^\top \frac{\epsilon_j}{\|\epsilon_j\|},  \notag \\
G_{l,3}&=\sum_{i,j\in S_l, j<i}\left(\frac{\epsilon_i+\bm\mu-\bm\mu_0}{\|\epsilon_i+\bm\mu-\bm\mu_0\|}-\frac{\epsilon_i}{\|\epsilon_i\|}\right)^\top\left(\frac{\epsilon_j+\bm\mu-\bm\mu_0}{\|\epsilon_j+\bm\mu-\bm\mu_0\|}-\frac{\epsilon_j}{\|\epsilon_j\|}\right). \notag
\end{align}
By the conclusion  (1)  in this subsection, we have
$$\frac{1}{\sqrt k}\sum_{l=1}^k G_{l,1}/V_{n_l}\stackrel{d}{\rightarrow}  N(0,1).$$
Moreover, by conditions (A1) and (A4$'$) in Section 3.2, and Lemma 1, 4, and 5 of  \cite{Wang2015}, we have
 $E(G_{l,2})=0$ and $\var(G_{l,2}/V_{n_l})=o(1)$. As a result, it follows that  $\var\left(\sum_{l=1}^k G_{l,2}/(\sqrt{k}V_{n_l})\right)=o(1)$, that is, $\sum_{l=1}^k G_{l,2}/(\sqrt{k}V_{n_l})=o_p(1)$.
For $G_{l,3}$, denote
\begin{align}
G_{l,31}=&\frac{n_l(n_l-1)}{2}E\left(\frac{\epsilon_1+\bm\mu-\bm\mu_0}{\|\epsilon_1+\bm\mu-\bm\mu_0\|}\right)^\top E\left(\frac{\epsilon_2+\bm\mu-\bm\mu_0}{\|\epsilon_2+\bm\mu-\bm\mu_0\|}\right),  \notag\\
G_{l,32}=&\sum_{i,j\in S_l,j\neq i}E\left( \frac{\epsilon_i+\bm\mu-\bm\mu_0}{\|\epsilon_i+\bm\mu-\bm\mu_0\|}\right)^\top\left\{\frac{\epsilon_j+\bm\mu-\bm\mu_0}{\|\epsilon_j+\bm\mu-\bm\mu_0\|}-\frac{\epsilon_j}{\|\epsilon_j\|}-E\left( \frac{\epsilon_j+\bm\mu-\bm\mu_0}{\|\epsilon_j+\bm\mu-\bm\mu_0\|}\right)\right\},\notag\\
G_{l,33}=&\sum_{i,j\in S_l,j<i}\left\{\frac{\epsilon_i+\bm\mu-\bm\mu_0}{\|\epsilon_i+\bm\mu-\bm\mu_0\|}-\frac{\epsilon_i}{\|\epsilon_i\|}-E\left( \frac{\epsilon_i+\bm\mu-\bm\mu_0}{\|\epsilon_i+\bm\mu-\bm\mu_0\|}\right)\right\}^\top  \notag\\
&\times
\left\{\frac{\epsilon_j+\bm\mu-\bm\mu_0}{\|\epsilon_j+\bm\mu-\bm\mu_0\|}-\frac{\epsilon_j}{\|\epsilon_j\|}-E\left( \frac{\epsilon_j+\bm\mu-\bm\mu_0}{\|\epsilon_j+\bm\mu-\bm\mu_0\|}\right)\right\}.  \notag
\end{align}
Then it follows that
\begin{align}
G_{l,3}=&\sum_{i,j\in S_l,j\neq i}\left[E\left(\frac{\epsilon_i+\bm\mu-\bm\mu_0}{\|\epsilon_i+\bm\mu-\bm\mu_0\|}\right)+\left\{\frac{\epsilon_i+\bm\mu-\bm\mu_0}{\|\epsilon_i+\bm\mu-\bm\mu_0\|}-\frac{\epsilon_i}{\|\epsilon_i\|}-E\left( \frac{\epsilon_i+\bm\mu-\bm\mu_0}{\|\epsilon_i+\bm\mu-\bm\mu_0\|}\right)\right\}\right]^\top  \notag\\
&\times
\left[E\left(\frac{\epsilon_j+\bm\mu-\bm\mu_0}{\|\epsilon_j+\bm\mu-\bm\mu_0\|}\right)+\left\{\frac{\epsilon_j+\bm\mu-\bm\mu_0}{\|\epsilon_j+\bm\mu-\bm\mu_0\|}-\frac{\epsilon_j}{\|\epsilon_j\|}-E\left( \frac{\epsilon_j+\bm\mu-\bm\mu_0}{\|\epsilon_j+\bm\mu-\bm\mu_0\|}\right)\right\}\right]   \notag\\
=&G_{l,31}+G_{l,32}+G_{l,33}.\notag
\end{align}
By conditions (A2), (A3), and (A4$'$) in Section 3.2, Lemma 4 and 6 of \cite{Wang2015}, we have $G_{l,31}=E_{n_l}$, which  is a constant. Also, it is easy to see $E(G_{l,32})=E(G_{l,33})=0$,
$\var(G_{l,32}/V_{n_l})=o(1)$, and $\var(G_{l,33}/V_{n_l})=o(1).$
Therefore, $\sum_{l=1}^k (G_{l,3}-E_{n_l})/(\sqrt{k}V_{n_l})=o_p(1)$.
Then we have
$$\frac{G_{dis,n}-kE_{n_l}}{\sqrt k V_{n_l}}=\frac{1}{\sqrt k}\sum_{l=1}^k (G_{l}-E_{n_l})/V_{n_l}=\frac{1}{\sqrt k}\sum_{l=1}^k G_{l,1}/V_{n_l}+o_p(1).$$
Then by CLT and Slutsky's theorem, the asymptotic distribution of $G_{dis,n}$ can be derived.
$\blacksquare$

\section*{Funding}
Junlong Zhao was supported by National Science Foundation of China, No. 11871104, and the
Fundamental Research Funds for the Central Universities, China.

\bibliographystyle{elsarticle-harv}

\bibliography{bibfile}

\end{document}